\newif\ifdraft\drafttrue
\newif\ifinlineref\inlinereffalse
\newif\iffinal\finalfalse
\newif\ifextended\extendedfalse
\newif\ifdotikz\dotikzfalse
\newcommand{\comment}[1]{{\bf\color{blue}{*** #1 ***}}}
\newcommand{\comment}[1]{}
\newcounter{myenumctr}
\newenvironment{myenumerate}{\begin{list}{(\arabic{myenumctr})}{\usecounter{myenumctr}
\topsep=2pt
\setlength{\leftmargin}{7pt}
\setlength{\itemindent}{\labelwidth}
\setlength{\itemsep}{0cm}}}
{\end{list}}
\newcommand{\sys}{\ensuremath{A}}
\newcommand{\absys}{\ensuremath{\widehat{A}}}
\newcommand{\nop}[1]{}
\newcommand{\citeNselfBYB}[2]{{#1}~\citeyear{#2}}
\newcommand{\nqbls}{\vspace*{-0.25\baselineskip}}
\newcommand{\nhbls}{\vspace*{-0.5\baselineskip}}
\newcommand{\leanparagraph}[1]{\smallskip\noindent\textbf{#1}. } 
\def\mi#1{\mathit{#1\/}}
\def\ba{\begin{array}}
\def\ea{\end{array}}
\def\be{\begin{enumerate}}
\def\ee{\end{enumerate}}
\def\bi{\begin{itemize}}
\def\ei{\end{itemize}}
\def\beq{\begin{equation}}
\def\eeq#1{\label{#1}\end{equation}}
\def\beeq{\begin{equation*}}
\def\eeeq{\end{equation*}}
\def\beqq{\begin{equation*}}
\def\eeqq{\end{equation*}}
\def\ins{\,{\in}\,}
\newcommand\bcmdtab{\noindent\bgroup\tabcolsep=0pt%
  \begin{tabular}{@{}p{10pc}@{}p{20pc}@{}}}
\newcommand\ecmdtab{\end{tabular}\egroup}
  \title[Towards Abstraction in ASP with an Application on Reasoning about Agent Policies]
        {Towards Abstraction in ASP with an Application on Reasoning about Agent Policies}
  \author[Z. G. Saribatur, T. Eiter]
         {Zeynep G. Saribatur, Thomas Eiter\\
         Institute of Logic and Computation, TU Wien, Vienna, Austria\\
         \email{\{zeynep,eiter\}@kr.tuwien.ac.at}}
\newtheorem{thm}{Theorem}[section]
\newtheorem{defn}{Definition}
\newtheorem{exmp}{Example}
\begin{document}

\label{firstpage}

\maketitle

  \begin{abstract}
   ASP programs are a convenient tool for problem solving, whereas with large problem instances the size of the state space can be prohibitive. We consider abstraction as a means of over-approximation and introduce a method to automatically abstract (possibly non-ground) ASP programs that preserves their structure, while reducing the size of the problem. One particular application case is the problem of defining declarative policies for reactive agents and reasoning about them, which we illustrate on examples.
  \end{abstract}

  \begin{keywords}
    abstraction, answer set programming, agent policies
  \end{keywords}


\section{Introduction}

Answer Set Programming (ASP) is a widely used problem solving
approach. It offers declarative languages that can be used 
to formalize actions, planning, and agent policies, in an expressive setting
(e.g.\ direct and indirect action effects)
\cite{lif99c,bara-2003,DBLP:journals/aim/ErdemGL16}),
and has led to dedicated action languages \cite{lifschitz08}.
This and the availability of efficient solvers
makes ASP a convenient tool for representing and reasoning about actions.

Consider a scenario in which a robot may be in an unknown grid-cell environment with
obstacles and aim to find a missing person (Fig.~\ref{fig:scenario}). 
It acts according to a policy, which tells it
\begin{wrapfigure}{r}{3cm}

\centering

\vspace*{-.5\baselineskip}

\includegraphics[height=1.7cm]{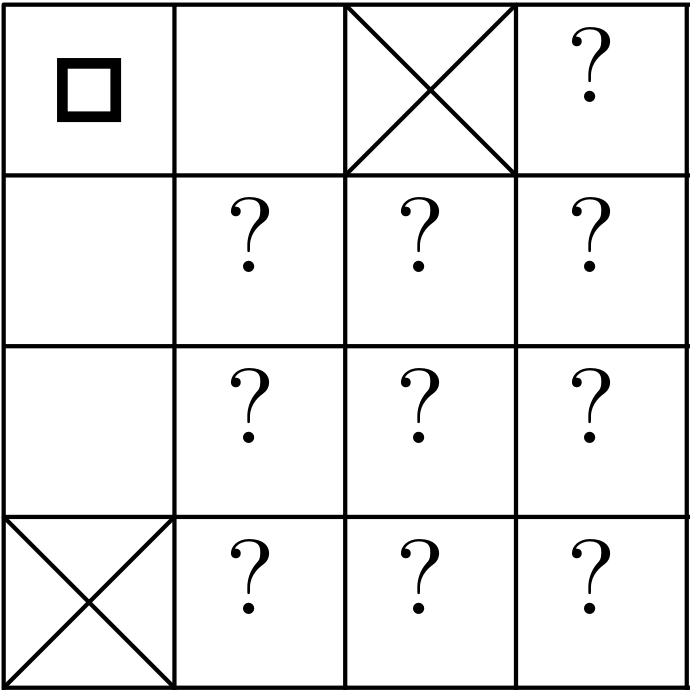}

\vspace*{-.5\baselineskip}

\caption{\quad Missing person search in an unkown environment}
\label{fig:scenario}
\vspace*{-\baselineskip}

\end{wrapfigure}
where to move next,
depending on the current observations (free / blocked cells) and
possible memory of past observations, until the person is found. To
this end, an action domain with a policy description, formalized in an
ASP program, is evaluated in each step. 
Naturally, we wonder whether the policy works, i.e.,
the person is always found, regardless of actual obstacle
locations. This can generate a large state space (for an $n{\times}n$
grid, of size larger than $2^{n{\times}n}$) and simple approaches such
as searching for a run in which the policy fails
quickly become infeasible.

To overcome this, we aim at using abstraction, which is a well-known
approach to reduce problem complexity. In a deliberate loss of
information, the problem is approximated to achieve a smaller or
simpler state space, at the price of spurious
counterexamples to the behavior \cite{clarke03}. In planning, abstraction
is mostly focused on relaxing the model, by omitting preconditions of
actions and details of the domain model
\cite{giunchiglia1992theory,knoblock1994automatically,sacerdoti1974planning}.
Cartesian abstraction \cite{seipp2013counterexample} refines
in the spirit of \cite{clarke03} failure states of abstract
trajectories, starting from a trivial abstraction; the classical
planning setting, however, disregards incomplete
initial states
(a known source of complexity).
These works do not consider policies with background knowledge that can do decision-making with information beyond action effects.

In the area of ASP-based action languages, abstraction has not been
considered so far, and neither in the broader ASP context.
In order to exploit abstraction for reasoning about action
descriptions and policies in ASP, we need an abstraction method for
ASP programs that offers the following features. First, information
loss on both the model and the domain is possible. Second, 
relationships and dependencies expressed in the program should be
largely preserved. And third, abstractions should be (semi-)
automatically computable. We address this challenge with the
following contributions.

\begin{itemize}
\item We introduce a method to abstract ASP programs in order to
  obtain an over-approximation of the answer sets of a program
  $\Pi$. That is, a program $\Pi'$ is constructed such that each answer
  set $I$ of $\Pi$ is abstracted to some answer set $I'$ of $\Pi'$; 
  While this abstraction is many to one, {\em spurious} answer sets of
  $\Pi'$ not corresponding to any answer set of $\Pi$ may exist. 

\item For abstraction, we consider omission of literals and also domain abstraction,
  where domain elements are merged. Note that omitting is different
  from forgetting literals (see \cite{DBLP:conf/lpnmr/Leite17} for an overview), as the latter aims at
  preserving information. 
  The abstraction types 
  can be combined and in principle iterated to build
  hierarchical abstractions.

\item The method largely preserves the structure of the rules and
  works modularly for non-ground programs. Thus, it is particularly attractive
  for abstraction of parameterized problems, as e.g.,\ in the search
  scenario (grid size $n$). Furthermore, it respects built-in
  predicates such as equality ($=$), comparisons ($<,\leq$) etc., and
  can be readily implemented, with little information on the underlying
  abstraction.
\item We illustrate the use of the abstraction method for reasoning about
   actions, in particular to find counterexamples to an 
   agent policy. Here, it can be particularly useful to
   identify  and explain ``essential'' aspects of failure.
\end{itemize}

While abstraction for ASP programs is motivated by applications in reasoning
about actions, the approach is domain independent and can be
utilized in other contexts as well.

\section{Preliminaries}

\leanparagraph{ASP} A logic program $\Pi$ is a set of rules $r$ of the form

\smallskip

\centerline{$\alpha_0 \leftarrow \alpha_1,\dots,\alpha_m,\mi{not}\
\alpha_{m+1},\dots,\mi{not}\ \alpha_n,\ \ 0\,{\leq}\, m \,{\leq}\, n,$}

\smallskip

\noindent where each $\alpha_i$ is a first-order (function-free) atom and 
$\mi{not}$
is default negation; $r$ is a \emph{constraint}
if $\alpha_0$ is falsity ($\bot$, then omitted) and a \emph{fact}  if
$n\,{=}\,0$. We also write $\alpha_0 \leftarrow B^+(r),\mi{not}\ B^-(r)$,
where $B^+(r)$ (positive body) is the set $\{\alpha_1, \dots,
\alpha_m\}$ and $B^-(r)$ (negative body) the set
$\{\alpha_{m+1},$ $\dots,\alpha_n\}$, 
or $\alpha_0 {\leftarrow} B(r)$.
Rules with variables 
stand for the set of their ground instances. 
Semantically, 
$\Pi$ induces a set of answer sets
\cite{gelfond1991classical}, which
are Herbrand models (sets $I$ of ground atoms) of $\Pi$ 
justified
by the rules, 
in that $I$ is a minimal model of $f\Pi^I=$ $\{ r \in \Pi
\mid I \models B(r)\}$
\cite{FLP04}.
The set of answer sets of a program $\Pi$ is denoted as $AS(\Pi)$. 
Negative literals $\neg \alpha$ can be 
encoded 
using atoms $\mi{neg}\_\alpha$ and
constraints $\leftarrow \alpha,\mi{neg}\_\alpha$.

Common syntactic extensions are \emph{choice rules} of the form
$\{\alpha\} \leftarrow B$, which stands for the rules $\alpha \leftarrow
B, \mi{not}\ \alpha'$ and $\alpha' \leftarrow B, \mi{not}\, \alpha$, where $\alpha'$ is a
new atom, and cardinality constraints and conditional
literals 
\cite{simons2002extending}; in particular, $i_\ell\{\,a(X)\,{:}\,b(X)\,\}i_u$ is true whenever at least $i_\ell$ and at most $i_u$ instances of $a(X)$ subject to $b(X)$ are true.

\leanparagraph{Describing actions and states} 
ASP is used to describe dynamic domains by a ``history
program" \cite{lif99c}, 
whose answer sets represent
possible evolutions of the system over a 
time interval. This is
achieved by adding a time variable to the atoms, and
introducing action atoms that may cause changes
over time. An action is defined by its preconditions and effects over the
atoms. For illustration, the following rule describes a \emph{direct effect} of the
action $\mi{goTo(X,Y)}$ over the robot's location $rAt(X,Y)$.
\nqbls
\beq \ba l
\mi{rAt}(X,Y,T{+}1) \leftarrow \mi{goTo}(X,Y,T).
\ea \eeq {eq:direct}
\vspace*{-1.5\baselineskip}

\noindent Actions can also have \emph{indirect effects} over the state (rules not
mentioning actions); e.g., 
the robot location  is visited:
\nqbls
\beq \ba l
\mi{visited}(X,Y,T) \leftarrow \mi{rAt}(X,Y,T).
\ea \eeq {eq:indirect}
\noindent Inertia laws (unaffectedness) can be elegantly expressed, e.g.\
\begin{eqnarray}
 \mi{rAt}(X,Y,T{+}1) \! \leftarrow \!\!\!\!\! & \mi{rAt}(X,Y,T),  \mi{not}\, \neg \mi{rAt}(X,Y,T{+}1). \nonumber
\end{eqnarray}

\noindent says that the robot location remains by default the same.

One can also give further restrictions on the state, e.g., the robot and
an obstacle can never be
in the same %
cell.
\nqbls
\beq \ba l
\bot \leftarrow \mi{rAt}(X,Y,T), \mi{obsAt}(X,Y,T).
\ea \eeq {eq:obs}

\noindent Constraints can also define \emph{preconditions} of an action, e.g.,
\nqbls
\beq \ba l
\bot \leftarrow \mi{goTo}(X,Y,T), \mi{obsAt}(X,Y,T).
\ea \eeq {eq:precond}
%
Dedicated action languages carry this idea further with special 
syntax for such axioms \cite{gelfondaction98}, and can be translated to ASP \cite{giunchiglia2004nct}.

\leanparagraph{Describing a policy} In addition to defining actions as above, ASP can also be used for further reasoning about the actions by singling out some of them under certain conditions. A policy that singles out the actions to execute from the current state can be described with a set
of rules,
where rules of
form $a {\leftarrow} B$ choose an action $a$ when certain
conditions $B$ are satisfied in the state. 
Further rules may 
describe auxiliary literals that are used by $B$. 

The rules below make the agent move towards some farthest point on the grid, unless the person is seen or caught. 
In the latter case, the agent moves towards the person's location. 
\begin{equation}
 \begin{split}
&1\,\{\mi{goTo}(X1,Y1,T) : \mi{farthest}(X,Y,X1,Y1,T)\}\,1\\
& \leftarrow \mi{rAt}(X,Y,T), \mi{not\ seen}(T), \mi{not\ caught}(T).\\
&\mi{goTo}(X,Y,T) \leftarrow \mi{seen}(T), \mi{not\ caught}(T), \mi{pAt}(X,Y,T). 
\end{split}
\label{eq:pol_formula}
\raisetag{24pt}
\end{equation} 
\nqbls

\noindent The farthest point is determined by the agent's location and the
cells considered at that state;
it is thus an indirect effect of
the previous move.
This also applies to $\mi{seen}$ and $\mi{caught}$:
\nqbls
\beq \begin{split}
&\mi{caught}(T) \leftarrow \mi{rAt}(X,Y,T), \mi{pAt}(X,Y,T).\\
&\mi{seen}(T) \leftarrow \mi{seeReachable}(X,Y,T), \mi{pAt}(X,Y,T).
\end{split}
\raisetag{26pt}
 \eeq {eq:pol_formula2}
Notice that above is on choosing single actions. For policies that choose a sequence of actions, the policy rules will be more involved, as the stages of the plan might have to be considered.

\section{Constructing an Abstract ASP Program}\label{sec:auto_abs}

Our aim is to over-approximate a given program through constructing a
simpler program by reducing the vocabulary and preserving the behavior
of the original program (i.e., the
results of reasoning on the original program are not lost), 
at the cost of obtaining spurious
solutions.

\begin{defn}
Given two programs $\Pi$ and $\Pi'$ with $|{\bf L}|{\geq}|{\bf L}'|$, $\Pi'$ is an \emph{abstraction} of $\Pi$ if there exists a mapping $m : {\bf L} \rightarrow {\bf L'}$ such that for $I\in AS(\Pi)$, $I'=\{m(l) ~|~ l \in I\}$ is an answer set of $\Pi'$.
\end{defn}

We consider two important base cases for an abstraction mapping $m$. Literal
omission is about omitting certain literals from the program, while
domain abstraction is on clustering different constants in the domain
and treating them as equal.

\begin{defn}
Given a program $\Pi$ and its abstraction $\Pi'$,
\begin{myenumerate}
\item $\Pi'$ is a \emph{literal omission abstraction of $\Pi$} if a
  set $L \subseteq {\bf L}$ of literals is omitted and the rest is
  kept, i.e., 
  ${\bf L}' = {\bf L}
  \setminus L$ and $m(l)=\emptyset$ if $l\in L$ and $m(l)=l$ otherwise.
\item $\Pi'$ is a \emph{domain abstraction of $\Pi$} if there is a function $m_{d}\,{:}\,D
\,{\rightarrow}\, \widehat{D}$ for a Herbrand domain $D$ and its abstraction $\widehat{D}$, such that for $l\,{=}\,p(v_1,\dots,v_n)$ we have $m(l)\,{=}\,p(m_{d}(v_1),$ \ldots, $m_{d}(v_n))$.
\end{myenumerate}
\end{defn}

In the following sections, we show a systematic way of building an abstraction of a given ASP program.
When constructing an abstract program for a given mapping, the aim is
to ensure that every original answer set $I$ is mapped to some
abstract answer set, while (unavoidably) some spurious abstract answer
sets may be introduced. Thus, an over-approximation of the original program is achieved.
The abstraction types can be composed to obtain further abstractions.

Notice that literal omission is different
than forgetting (see \cite{DBLP:conf/lpnmr/Leite17} for an overview),
as it ensures the over-approximation of the original program by making
sure that all of the original answer sets are preserved in the
abstract program, without resorting to language extensions such as
nested logic programs that otherwise might be necessary. 

\nhbls

\subsection{Literal omission}

Given $L$, we build from $\Pi$ a program $\Pi_{\overline{L}}^m$ 
as follows. For every literal $l \in ({\bf L} \setminus L) \cup \{\bot \}$ and rule 
$r: l \leftarrow B(r)$ in $\Pi$, 
\be[(1)]
\itemsep=2pt
\item if $ B(r) \subseteq {\bf L} \setminus L$, we include $m(l)
  \leftarrow m(B(r))$;
\item otherwise,  if $l\,{\neq}\,\bot$ we include for every $l' \in B(r) \cap
  L$ the rule $0 \{ m(l) \} 1 \leftarrow m(B(r)\setminus \{l'\}).$
\ee
Notice that constraints are omitted in the constructed program if the body contains an omitted literal. If instead, the constraint gets shrunk, then for some interpretation $\widehat{I}$, the body may fire in $\Pi_{\overline{L}}^m$, while it was not the case in $\Pi$ for any $I \in AS(\Pi)$ s.t. $m(I)=\widehat{I}$. Thus 
$I$ cannot be mapped to an abstract answer set of $\Pi_{\overline{L}}^m$, i.e., $\Pi_{\overline{L}}^m$ is not an over-approximation of $\Pi$.

Omitting non-ground literals means omitting 
all occurrences of the predicate. 
If in a rule $r$, the omitted non-ground literal $p(V_1,\dots,V_n)$ shares some arguments, $V_i$, with the head $l$, then $l$ is conditioned over $\mi{dom}(V_i)$ (a special predicate to represent the Herbrand domain) in the constructed rule, so that all values of $V_i$ are considered.

\begin{exmp}\label{ex:toy}
Consider the following simple program $\Pi$:
\vspace*{-.25\baselineskip}
\begin{align}
  a(X_1,X_2) &\leftarrow c(X_1), b(X_2). \label{eq:1}\\
  d(X_1,X_2) &\leftarrow a(X_1,X_2), X_1{\leq}X_2.\label{eq:2}
\end{align} 
\vspace*{-1.25\baselineskip}

In omitting $c(X)$, while rule \eqref{eq:2} remains the same, rule \eqref{eq:1} changes to 
$0\{a(X_1,X_2)\,{:}\,\mi{dom}(X_1)\}1 \leftarrow b(X_2)$.
From 
$\Pi$
and the facts $c(1),b(2)$, we get the answer set $\{c(1)$, $b(2)$,
$a(1,2)$, $d(1,2)\}$, and with $c(2),b(2)$ we get
$\{c(2)$, $b(2),$ $a(2,2)$, $d(2,2)\}$.
 After omitting $c(X)$, the abstract answer
sets with fact $b(2)$ become $\{b(2),$ $a(1,2),$ $d(1,2)\}$ and
$\{b(2)$, $a(2,2),$ $d(2,2)\}$, which cover the original answers, so that all original answer sets can be mapped to
some abstract answer set.
\end{exmp}

For a semantical more fine-grained removal, e.g., removing $c(X)$ for $X{<}3$, rules may be split in cases, e.g., (\ref{eq:1}) into $X_1{<}3$ and $X_1{\geq} 3$, and
treated after renaming separately.

The following result shows that $\Pi_{\overline{L}}^m$ can be seen as an over-approximation of $\Pi$.

\begin{thm}
For every $I \in AS(\Pi)$ and set $L$ of literals,
$I_{|\overline{L}} \in AS(\Pi_{\overline{L}}^m)$ where $I_{|\overline{L}} =I\setminus L$.
\end{thm}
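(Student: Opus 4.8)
The plan is to use the FLP characterisation: since $I \in AS(\Pi)$, $I$ is a minimal model of $f\Pi^I = \{r \in \Pi \mid I \models B(r)\}$, and I must show that $\widehat{I} := I_{|\overline{L}} = I \setminus L$ is a minimal model of $f(\Pi_{\overline{L}}^m)^{\widehat{I}}$. I argue throughout on ground instances. The guiding observation is that every candidate counter-model $J$ of a reduct lies below $\widehat{I} \subseteq I$, so all negative body literals are automatically satisfied and the reducts behave like positive programs on the lattice below $I$ (resp.\ $\widehat{I}$); the case-(2) rules $0\{m(l)\}1 \leftarrow m(B(r)\setminus\{l'\})$ are, through their definitional expansion with a fresh atom, treated as ordinary rules, so that when $m(l) \in \widehat{I}$ and the body fires they supply support for $m(l)$, i.e.\ such a rule acts in the reduct as the definite rule $m(l) \leftarrow m(B(r))$ (its body reduces to $m(B(r))$, since $l'$ is itself omitted).

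First I would check that $\widehat{I}$ is a model of $\Pi_{\overline{L}}^m$. The case-(2) rules are choice rules and are satisfied by any interpretation. For a case-(1) rule $m(l) \leftarrow m(B(r))$ we have $B(r) \subseteq {\bf L}\setminus L$, so $m$ is the identity on its atoms and, as $\widehat{I}$ and $I$ coincide on ${\bf L}\setminus L$, $\widehat{I} \models B(r)$ iff $I \models B(r)$; if $l \neq \bot$ then $I\models B(r)$ forces $l\in I$, hence $l=m(l)\in\widehat{I}$, and if $l=\bot$ then $I\not\models B(r)$ because $I$ is a model of $\Pi$, so the constraint holds vacuously. Thus $\widehat{I}\models\Pi_{\overline{L}}^m$.

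The core step is minimality. Suppose, for contradiction, that some $J\subsetneq\widehat{I}$ is a model of $f(\Pi_{\overline{L}}^m)^{\widehat{I}}$, and set $J' := J \cup (I\cap L)$. Since $J'$ agrees with $I$ on $L$ and with $J$ on ${\bf L}\setminus L$, while $J \subsetneq \widehat{I} = I\cap({\bf L}\setminus L)$, we get $J'\subsetneq I$. I then verify $J' \models f\Pi^I$ for each $r:l\leftarrow B(r)$ with $I\models B(r)$. No constraint occurs here, since $I$ satisfies all of them, so $l\neq\bot$. If the head $l\in L$ is omitted, then $I\models B(r)$ gives $l\in I\cap L\subseteq J'$ and $r$ holds trivially. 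If $l$ is kept, then $l\in I\cap({\bf L}\setminus L)=\widehat{I}$, and $\widehat{I}\models m(B(r))$ follows from $I\models B(r)$ and the agreement of $I$ and $\widehat{I}$ on ${\bf L}\setminus L$. Moreover $J'\models B(r)$ holds exactly when $J\models m(B(r))$: the omitted positive body atoms lie in $I\cap L\subseteq J'$ and the omitted negative ones are excluded because $I\models B(r)$. When $B(r)\subseteq{\bf L}\setminus L$ the case-(1) rule $l\leftarrow B(r)$ lies in the reduct; when $B(r)\cap L\neq\emptyset$ the case-(2) choice rule, since $l\in\widehat{I}$, acts as the definite reduct rule $l\leftarrow m(B(r))$. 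In either case $J\models m(B(r))$ together with $J$ modelling the reduct forces $l\in J\subseteq J'$, so $J'\models r$. Hence $J'$ is a model of $f\Pi^I$ strictly below $I$, contradicting the minimality of $I$; therefore $\widehat{I}$ is minimal and $\widehat{I}\in AS(\Pi_{\overline{L}}^m)$.

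The step I expect to be the main obstacle is the treatment of the case-(2) choice rules inside the minimality argument: under a naive reading of the FLP reduct a choice head imposes no constraint, so one must argue carefully, via the expansion with the auxiliary atom and using $l\in\widehat{I}$, that the rule does support $l$ and effectively behaves as $l\leftarrow m(B(r))$. A secondary point is the bookkeeping showing $J'\models B(r)\Leftrightarrow J\models m(B(r))$, i.e.\ that re-inserting exactly the omitted atoms $I\cap L$ neither spuriously activates nor blocks any original body.
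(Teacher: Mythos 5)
Your proof is correct, and it goes well beyond what the paper itself offers: the paper's entire justification for this theorem is the informal remark that the choice rules introduced in step (2) ``cover all possible cases'' that the omitted literals could have produced, so the abstract answer sets cover the original ones. Your proposal supplies the two things that remark leaves out. The modelhood check (case-(1) rules via the identity of $m$ on ${\bf L}\setminus L$, dropped constraints needing no check) is the rigorous version of the paper's intuition; but the minimality argument --- lifting a hypothetical smaller model $J \subsetneq \widehat{I}$ of the abstract FLP reduct to $J' = J \cup (I \cap L) \subsetneq I$ and verifying $J' \models f\Pi^I$ rule by rule, contradicting the minimality of $I$ --- appears nowhere in the paper, and it is exactly the step that makes the statement a theorem rather than a plausible claim. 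Your treatment of the choice rules through the paper's own definitional expansion (a fresh atom $\alpha'$ per rule, so that for submodels of $\widehat{I}$ a reduct rule with $m(l) \in \widehat{I}$ behaves like the definite rule $m(l) \leftarrow m(B(r))$) is the right reading and is consistent with the preliminaries; note this works because you argue per ground rule, where the choice head is a single atom, sidestepping the global upper bound suggested by the non-ground notation $0\{a(X_1,X_2)\,{:}\,\mi{dom}(X_1)\}1$ in Example~1. Two small points you should make explicit to close the argument: (i) strictly, the answer set of the expanded program is $\widehat{I}$ plus the auxiliary atoms $\alpha'$ of active choice rules whose head atom is false, so you need the one-line observation that a counter-model cannot drop only auxiliary atoms (the rule $\alpha' \leftarrow m(B(r)), \mi{not}\ m(l)$ lies in the reduct and re-forces $\alpha'$); and (ii) your equivalence $J' \models B(r) \Leftrightarrow J \models m(B(r))$ is used, and is valid, only for rules with $I \models B(r)$, i.e., rules of $f\Pi^I$, since it needs the omitted positive body atoms to lie in $I \cap L$ --- your text does restrict to that scope, but the restriction is what makes the bookkeeping sound.
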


\noindent By introducing choice rules for any rule that contains the omitted literal, all possible cases that could have been achieved by having the omitted literal in the rule are covered. Thus, the abstract answer sets cover the original answer sets.

\nqbls

\subsection{Domain abstraction} 

Abstraction on the domain, $D$, divides it into equivalence classes, $\widehat{D}=\{\hat{d}_1,\dots,\hat{d}_k\}$, where some values of the variables are seen as equal. 
Such an abstraction can be constructed by keeping the structure of the
literals, and having abstract rules similar to the original ones. The
original rule may rely on certain built-in relations between the
literals' variables, e.g., $=,\neq,<,\leq$, such as \eqref{eq:2};
we can automatically lift 
them to the abstraction
(discussed below), and 
aim to use
\smallskip

\centerline{$d(\widehat{X}_1,\widehat{X}_2) \leftarrow a(\widehat{X}_1,\widehat{X}_2), \widehat{X}_1{\leq}\widehat{X}_2.$}

\smallskip

\noindent where $\widehat{X}_1,\widehat{X}_2$ are variables ranging over $\widehat{D}$.
However, due to the mapping, the lifted relations may create
uncertainties which must be dealt with. 
E.g.\
for a mapping $m_d(\{1,2,3\})=k$, the atom $a(k,k)$ can be true in
the abstract state because $a(3,2)$ is true in the
original state. 
 The original program can have answer sets $I$ that contain (i) $a(3,2),\mi{not}\ d(3,2)$, or (ii) $a(2,2),d(2,2)$. If we keep the structure of the original rule, in any abstract answer set $d(k,k)$ must hold if $a(k,k)$ holds; hence, no $I$ with (i) can be mapped to an abstract answer set. This would result in losing a possible answer set. 
We can avoid this
by using an altered rule
\smallskip

\centerline{$0\{d(\widehat{X}_1,\widehat{X}_2)\}1 \leftarrow a(\widehat{X}_1,\widehat{X}_2), \widehat{X}_1{\leq}\widehat{X}_2.$}

\smallskip

A naive approach would 
abstract all rules by modifying the heads
to choice rules. However, 
negation in rule bodies may cause
a loss of
original answer sets in the abstraction. Say we have a rule with negation in the body, $d(X,X) \leftarrow \mi{not}\ a(X,X)$. If it is only changed to a choice rule in the abstract program, when $a(k,k)$ holds we will not have $d(k,k)$, while originally we can have $\{d(2,2),a(3,2)\}$. Such rules must be treated specially to catch the cases of obtaining $d(k,k)$ while $a(k,k)$ holds.

For a finer-grained and systematic approach,
we focus on rules of form $r: l \leftarrow B(r), \Gamma_{\mi{rel}}(r)$
where the variables 
in $B(r)$ are standardized apart
and $\Gamma_{\mi{rel}}$ consists
of built-in relation literals that
impose restrictions 
on the variables in $B(r)$.

\begin{exmp}
\label{ex:standardize}
The rules  (\ref{eq:1}) and (\ref{eq:2}) are standardized apart and they
have  $\Gamma_{\mi{rel}}(r)=\top$ (or a dummy $X\,{=}\,X$) and
$\Gamma_{\mi{rel}}(r)=X_1\leq X_2$, respectively. The rule $\mi{c} \leftarrow
\mi{r}(X,Y), \mi{p}(X,Y)$ is rewritten to the rule $\mi{c} \leftarrow
\mi{r}(X_1,Y_1), \mi{p}(X_2,Y_2),\Gamma_{rel}$ with $\Gamma_{rel}\,{=}\, (X_1\,{=}\,X_2,Y_1\,{=}\,Y_2)$.
\end{exmp}
The basic idea is as follows:
when
constructing the abstract program, we either (i) just abstract each
literal in a rule, or (ii) in case of uncertainty due to abstraction,
we guess the rule head to catch possible cases. The uncertainty may
occur 
due to having relation restrictions over non-singleton equivalence
classes
(i.e. $|m_d^{-1}(\hat{d}_i)|>1$), 
or having negative literals that are mapped to non-singleton abstract literals.

To the best of our knowledge, this is the first such approach of abstracting ASP programs.

\leanparagraph{Abstracting the relations} For simplicity, we first
focus on binary relations, e.g., $=,<,\leq,\neq$, 
and $\Gamma_{rel}(r)$ 
of the form $\mi{rel}(X,c)$ or $\mi{rel}(X,Y)$.

It is necessary to 
reason about the cases that can occur for the truth values of ${\mi{rel}}(\hat{d}_1,\hat{d}_2)$, for $\hat{d}_1,\hat{d}_2 \in \widehat{D}$, in order to obtain minimal abstract models that cover the original answer sets. There are four cases to consider:




\noindent\begin{tabular}{r@{}l}
 I &  $\phantom{\neg} \mi{rel}(\hat{d}_1,\hat{d}_2) \wedge \forall x_1\,{\in}\, \hat{d}_1,\forall x_2\,{\in}\, \hat{d}_2. \mi{rel}(x_1,x_2)$ \\
 II & $\neg \mi{rel}(\hat{d}_1,\hat{d}_2) \wedge \forall x_1\,{\in}\,  \hat{d}_1,\forall x_2\,{\in}\, \hat{d}_2. \neg \mi{rel}(x_1,x_2)$ 
     \\
 III  &  $\phantom{\neg} \mi{rel}(\hat{d}_1,\hat{d}_2) \wedge \exists x_1\,{\in}\, \hat{d}_1,\exists x_2\,{\in}\, \hat{d}_2. \neg \mi{rel}(x_1,x_2)$\\
  IV &        $\neg \mi{rel}(\hat{d}_1,\hat{d}_2) \wedge \exists x_1\,{\in}\, \hat{d}_1,\exists x_2\,{\in}\, \hat{d}_2. \mi{rel}(x_1,x_2)$
\end{tabular}


\noindent For $\mi{rel}(\hat{d}_1,\hat{d}_2){=} \top$, Case III is
more common in domain abstractions, while case I occurs e.g.,\ for singleton
mappings (i.e., $|\hat{d}_1|=|\hat{d}_2|=1$) or for negative relations such as $\neq$. For $\mi{rel}(\hat{d}_1,\hat{d}_2) {=}\bot$, Case II is the common case, e.g., $=,\leq$, whereas case IV may occur for negative relations or $<$.

\begin{exmp}
Consider $\mi{rel}(X,Y)=X\leq Y$ and a mapping $m_d(\{1\})=\hat{d}_1, m_d(\{2,3\})=\hat{d}_k$ with an order $\hat{d}_1 < \hat{d}_k$ on the abstract values. Notice that case I occurs for $\hat{d}_1 \leq \hat{d}_k$ and $\hat{d}_1 \leq \hat{d}_1$, while case III occurs for $\hat{d}_k \leq \hat{d}_k$. The latter is due to the possibility of having $3 \leq 2$ which is false.
\end{exmp}

The cases that the equivalence classes have for a binary $\mi{rel}$ 
can be computed by simple queries and represented by facts of form $\mi{type}_{\mi{rel}}^{\mi{case}}(\hat{d}_1,\hat{d}_2)$ 
for each equivalence classes $\hat{d}_1,\hat{d}_2$.

\leanparagraph{Program abstraction} We start with a procedure for programs
with rules $r: l \leftarrow B(r), \mi{rel}(t_1,t_1')$
where $|B^-(r)|{\leq} 1$. 

For any rule $r$ and $*{\in} \{+,-\}$, let the set
$S^{*}_{\mi{rel}}(r)=\{l_j \in B^{*}(r) \mid arg(l_j) \cap
\{t_1,t_1'\} \neq \emptyset\}$ be the positive and negative literals, respectively, that share an argument with $\mi{rel}(t_1,t_1')$.
We assume for simplicity that $B^-(r) \subseteq S_{rel}(r)$
and discuss how to handle rules not meeting this assumption  later.

We build a program $\Pi_{dom}^m$ according to the mapping $m$ 
as follows. 
For any rule $r: l \leftarrow B(r), \mi{rel}(t_1,t_1')$ in $\Pi$, we
add:
\vspace*{-.125\baselineskip}
{
\be[\hspace{1cm}\quad(1)]

\item[(0)] If $B^+(r)\setminus S^+_{\mi{rel}}(r)\neq \emptyset$:
\vspace*{-.25\baselineskip}
\be [$~\hspace{-1.5em}$(a)]
\item If $\mi{rel}(t_1,t_1'){=} \top:$~ $m(l) \leftarrow m(B(r))$. 
\ee
\vspace*{-.25\baselineskip}

\item If $S_{\mi{rel}}^+(r)\neq \emptyset$:
\vspace*{-.35\baselineskip}
\be [$~\hspace{-1.5em}$(a)]
\itemsep=3pt
\item $m(l) \leftarrow m(B(r)), {rel}(\hat{t}_i,\hat{t}_j),\mi{type}^{\textup{I}}_{\mi{rel}}(\hat{t}_i,\hat{t}_j)$.
\item $0\{m(l)\}1 \leftarrow m(B(r)), {rel}(\hat{t}_i,\hat{t}_j),\mi{type}^{\textup{III}}_{\mi{rel}}(\hat{t}_i,\hat{t}_j)$.
\item $0\{m(l)\}1 \leftarrow m(B(r)), \neg {rel}(\hat{t}_i,\hat{t}_j),\mi{type}^{\textup{IV}}_{\mi{rel}}(\hat{t}_i,\hat{t}_j)$.
\ee
\vspace*{-.25\baselineskip}
\item If $l_i {\in} S_{\mi{rel}}^-(r)$: 
\vspace*{-.2\baselineskip}
\be [$~\hspace{-2em}$(a$'$)] 
\itemsep=3pt
\item $m(l) {\leftarrow} m(B(r)), {rel}(\hat{t}_i,\hat{t}_j).$
\item[(b$'$)] $0\{m(l)\}1 {\leftarrow} m(B^{\mi{shift}}_{l_i}(r)), {rel}(\hat{t}_i,\hat{t}_j), \mi{type}^{\textup{III}}_{\mi{rel}}(\hat{t}_i,\hat{t}_j).$  
\item[(c$'$)] same as (c), if $S_{\mi{rel}}^+(r){=} \emptyset$;\\[1pt]
$ 0\{m(l)\}1 {\leftarrow} m(B^{\mi{shift}}_{l_i}(r)), \neg {rel}(\hat{t}_i,\hat{t}_j),\mi{type}^{\textup{IV}}_{\mi{rel}}(\hat{t}_i,\hat{t}_j).$ \\[1pt]
$~0\{m(l)\}1 {\leftarrow} m(B^{\mi{shift}}_{l_i}(r)), {rel}(\hat{t}_i,\hat{t}_j), \mi{type}^{\textup{IV}}_{\mi{rel}}(\hat{t}_i,\hat{t}_j).$  
\ee
\vspace*{-.3\baselineskip}
\ee
}
\noindent where $B^{\mi{shift}}_{l_i}(r){=}B^+(r)\cup \{l_i\},\mi{not}\ B^-(r){\setminus} \{l_i\}$.

Case (0) is the special case of having positive literals that do not share arguments with $\mi{rel}$. If $\mi{rel}{=}\top$, then it will not be processed by next steps. Thus, the abstraction of $r$ is added. The assumption on $B^-(r)$ about being included in $S_{rel}(r)$ prohibits the case $B^-(r){\setminus} S^-_{\mi{rel}}(r){\neq} \emptyset$.

If $\mi{rel}(t_1,t'_1)$ shares arguments with a positive body literal, we add
rules to grasp the possible cases resulting from the relation type. In
case of uncertainty, 
the head is made a choice,
and for case IV, we flip the relation, $\neg\mi{rel}$, to catch the
case of the relation holding true. If $\mi{rel}(t_1,t'_1)$ shares arguments with
a negative body literal, we need to 
grasp the
uncertainty arising
from negation. We do this by adding rules in which we shift the
related literal to 
the positive body, via $B^{\mi{shift}}_{l_i}(r)$.

(2-c$'$) deals with the special case of
a type IV relation and a negative literal, e.g., $b(X_1) \leftarrow
\mi{not}\ a(X_1,X_2), X_1 {\neq} X_2$. If 
$r$ 
is abstracted only
by keeping the same structure, 
$m(B(r))$ 
might not be
satisfied by abstract literals that actually have corresponding
literals which satisfy 
$B(r)$. 
E.g., $a(2,3){=}\bot$
satisfies $r$; this can only be reflected in the abstraction by
$a(k,k){=}\bot$ which actually 
does 
not satisfy 
$m(B(r))$. 
Thus, 
when building the abstract rules, rules 
for all combinations of shifting the literal and flipping the relation need to be added. 

Notably, the construction of $\Pi^m_{dom}$ is modular, rule by
rule;
facts $p(\vec{t})$ are simply lifted to abstract facts $p(m(\vec{t}))$.

\begin{exmp}\label{ex:toy_dom}
Consider the rules from Example~\ref{ex:toy} plus 
\begin{align}
  e(X_1) &\leftarrow \mi{not}\ a(X_1,X_2), X_1{=}X_2.\footnotemark  \label{eq:3}
\end{align} 
\footnotetext{In order to ensure safety, these rules can be extended with special built-in domain predicates which do not require to be standardized apart.}
%
\noindent  over the domain $D=\{1,2,3\}$.
%
Suppose $\widehat{D}{=}\{\hat{d}_1,\hat{d}_k\}$ with mapping $m_{d}(1){=}\hat{d}_1$, $m_{d}(\{2,3\}){=}\hat{d}_k$. The abstract program constructed is as follows, in simplified form:
\vspace*{-.25\baselineskip}
\begin{align}
  a(\widehat{X}_1,\widehat{X}_2) &\leftarrow c(\widehat{X}_1), b(\widehat{X}_2) \label{eq:11}\\
  d(\widehat{X}_1,\widehat{X}_2) &\leftarrow a(\widehat{X}_1,\widehat{X}_2), \widehat{X}_1 {\leq} \widehat{X}_2, \widehat{X}_1 {=} \hat{d}_1,\widehat{X}_2 {=} \hat{d}_k \!\!\label{eq:21}\\
  \hspace{-0.45cm} 0\{d(\widehat{X}_1,\widehat{X}_2)\}1 & \leftarrow a(\widehat{X}_1,\widehat{X}_2), \widehat{X}_1 {\leq} \widehat{X}_2,\widehat{X}_1{=} \hat{d}_k,\widehat{X}_2 {=} \hat{d}_k \label{eq:22}\\
  e(\widehat{X}_1) &\leftarrow \mi{not}\ a(\widehat{X}_1,\widehat{X}_2), \widehat{X}_1{=}\widehat{X}_2 \!\!\!\!\label{eq:31}\\
0\{e(\widehat{X}_1)\}1 &\leftarrow a(\widehat{X}_1,\widehat{X}_2), \widehat{X}_1{=}\widehat{X}_2, \widehat{X}_1{=} \hat{d}_k,\widehat{X}_2 {=} \hat{d}_k \label{eq:33}
\end{align} 
\vspace*{-1\baselineskip}

\noindent Here the $\mi{type}^{case}_{\mi{rel}}$ literals have been evaluated, and
redundant rules are omitted.
Observe that \eqref{eq:11} is same as \eqref{eq:1} as it has
$\mi{rel}{=}\top$. 
From \eqref{eq:2}, we get \eqref{eq:21} for $\hat{d}_1,\hat{d}_k$ which have case I for $\leq$, and \eqref{eq:22} for $\hat{d}_k,\hat{d}_k$ that have case III. From \eqref{eq:3}, we get \eqref{eq:31} and \eqref{eq:33} with shifting for case III.

For given facts $c(3),b(2)$, 
$I =\{a(3,2),e(1),e(2),e(3),c(3),b(2)\} \in AS(\Pi)$. 
After applying m, 
the facts become $c(\hat{d}_k),b(\hat{d}_k)$ and 
$\{a(\hat{d}_k,\hat{d}_k),e(1),e(\hat{d}_k),c(\hat{d}_k),b(\hat{d}_k)\} \in AS(\Pi^m_{dom})$, which covers 
$I$. Note that the choice rule \eqref{eq:33} ensures
that $e(\hat{d}_k)$ can still be obtained even 
w\textbf{}hen $a(\hat{d}_k,\hat{d}_k)$ holds. It likewise covers 
$\{c(2), b(3), a(2,3), d(2,3), e(1), e(2), e(3)\} \in AS(\Pi)$ for the facts $c(2),b(3)$.

\end{exmp}

We prove that the abstraction procedure constructs a system $\Pi_{dom}^m$ that over-approximates $\Pi$.

\begin{thm}
Let $m$ be a domain abstraction over $\Pi$. 
Then for every 
$I \in AS(\Pi)$, $m(I) \in AS(\Pi_{dom}^m)$.
\end{thm}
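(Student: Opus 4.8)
The plan is to use the FLP characterization from the preliminaries: writing $\widehat{I} := m(I)$, the claim $\widehat{I} \in AS(\Pi_{dom}^m)$ holds exactly when $\widehat{I}$ is a $\subseteq$-minimal model of the reduct $f(\Pi_{dom}^m)^{\widehat{I}} = \{\hat r \in \Pi_{dom}^m \mid \widehat{I} \models B(\hat r)\}$. So the proof splits into (i) showing $\widehat{I}$ is a model of $\Pi_{dom}^m$ and (ii) showing it is minimal over the reduct. I would restrict attention to the normalized rule form the construction targets, $r: l \leftarrow B(r), \mi{rel}(t_1,t_1')$ with the variables standardized apart and $B^-(r) \subseteq S_{\mi{rel}}(r)$, $|B^-(r)|\leq 1$, and exploit throughout that $\Pi_{dom}^m$ is built modularly, rule by rule, so that I can argue per original rule and per witnessing substitution.

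For (i) I would run a case analysis mirroring the construction. The nontrivial rules are the definite ones (cases (0a), (1a), (2a$'$)): if $\widehat{I}$ satisfies the abstract body together with the guard $\mi{rel}(\hat t_i,\hat t_j)\wedge \mi{type}^{\mathrm{I}}_{\mi{rel}}(\hat t_i,\hat t_j)$, then since $\widehat{I}=m(I)$ each abstract body atom has a preimage in $I$, and the type-I guard guarantees that $\mi{rel}$ holds for \emph{every} pair of representatives, in particular for a chosen witnessing tuple. Hence the corresponding original instance of $r$ has its body satisfied by $I$; as $I\models\Pi$ the head is in $I$, so $m(l)\in\widehat{I}$. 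Choice rules with head $0\{m(l)\}1$ are satisfied by any interpretation. Constraints require the observation that only their definite (type-I) form is retained, which fires only when every representative pair satisfies the relation; a violation by $\widehat{I}$ would force the witnessing original instance to be violated by $I$, contradicting $I\models\Pi$.

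The core of the argument is (ii). Since $I$ is an answer set, it is a minimal model of $f\Pi^I$ and thus carries a well-founded level assignment in which every $a\in I$ is supported by some rule $r$ with head $a$, $I\models B(r)$, whose positive body atoms have strictly smaller level. I would transfer this to $\widehat{I}$: for each such $r$ I must exhibit an abstract rule generated from $r$ that lies in $f(\Pi_{dom}^m)^{\widehat{I}}$ and supports $m(a)$ from abstract atoms of strictly smaller level, and then conclude by induction on levels that any model $\widehat{J}\subseteq\widehat{I}$ of the reduct contains all of $\widehat{I}$. A useful preliminary is that, once $m(a)\in\widehat{I}$, a choice rule $0\{m(a)\}1\leftarrow \widehat{B}$ acts as definite support in the reduct: under the standard expansion $m(a)\leftarrow \widehat{B}, \mi{not}\ \alpha'$ with fresh $\alpha'\notin\widehat{I}$, the rule is retained whenever $\widehat{I}\models\widehat{B}$ and forces $m(a)$ in every submodel. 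Thus guesses behave monotonically once fixed to $\widehat{I}$, which is what keeps the over-approximation tight enough to preserve support.

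The main obstacle is matching the \emph{right} abstract rule to the support rule $r$ when $r$ has a negative body literal or when $\mi{rel}$ falls under type III/IV over a non-singleton class. If $a$ is supported via $\mi{not}\ b$ with $m(b)\in\widehat{I}$ because a \emph{different} preimage $b'\neq b$, $m(b')=m(b)$, lies in $I$, the plainly abstracted body fails and support must instead come from a shifted rule built on $B^{\mi{shift}}_{l_i}(r)$ together with the appropriate $\mi{type}^{\mathrm{III}}$ or flipped-$\mi{rel}$ $\mi{type}^{\mathrm{IV}}$ variant. I would verify, by checking each of cases (2b$'$)--(2c$'$) against the four relation cases, that exactly one such rule has body satisfied by $\widehat{I}$ with its positive part at strictly smaller level, so $m(a)$ remains founded. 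Because $m$ is many-to-one I cannot pull a candidate smaller model $\widehat{J}$ back to the original program; the argument must run forward from the original support, and the principal effort is the bookkeeping that the facts $\mi{type}^{\mathrm{case}}_{\mi{rel}}$ select a surviving, level-decreasing rule in every configuration of the relation and the shifted negative literal.
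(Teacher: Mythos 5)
Your proposal follows essentially the same route as the paper's own (sketched) proof: first show that $m(I)$ is a model of $\Pi^m_{dom}$ via the structure-preserving rules (0a), (1a), (2a$'$), and then argue that the additional choice/shift rules (1b)--(1c), (2b$'$)--(2c$'$) restore foundedness of exactly those abstract atoms whose original support involves a negative literal or a non-singleton relation, so that minimality is not violated. Your version merely makes the paper's sketch precise---via the FLP reduct, a level-mapping induction, and the observation that a choice rule whose body holds in $\widehat{I}$ acts as definite support within the reduct---so it is the same argument carried out in more detail rather than a different approach.
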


\begin{proof}[Proof (sketch)]
With the 
rules (0a), (1a-1b), and (2a$'$), we ensure that
$\widehat{I}$ is a
model
of 
$\Pi^m_{dom}$, as we either keep the structure
of a rule $r$ or change it to a choice rule. The rules added in steps
(1b-1c) and (2b$'$-2c$'$) 
serve to catch the cases that may
violate the minimality of the model due to a negative literal or a
relation over non-singleton equivalence classes. The rules (1b,2b$'$) deal with
having a literal (resp.\ relation literal) that is false in
$I$ but thought to be true in the abstract model $\hat{I}$, and (1c,2c$'$)
deal with 
a literal (resp.\ relation literal) that is thought to be false in 
$\hat{I}$ but true in $I$.
\end{proof}
\vspace*{-0.4\baselineskip}

\leanparagraph{General case} The construction can be applied to more general programs by focusing on two aspects: 1) $|B^-(r)|{>}1$: For multiple negative literals in the rule, the shifting must be applied to each negative literal. 2) $|\Gamma_{\mi{rel}}|{>}1$: To handle multiple relation literals, a straightforward approach is to view $\Gamma_{\mi{rel}} = \mi{rel}(t_1,t'_1),$ \ldots, $\mi{rel}(t_k,t'_k)$ as 
a literal of an $n$-ary relation $\mi{rel}'(X_1,X'_1,\ldots,X_k,X'_k)$, $n\,{=}\,2k$.
The abstract version of such built-ins $\mi{rel}'$ and the type cases I-IV are readily lifted.

%
Let $\Pi_{dom}^{*\ m}$ be the program obtained from a program $\Pi$
with the generalized abstraction procedure.
Then we 
obtain:
\begin{thm}
For every $I \in AS(\Pi)$
, $m(I) \in AS(\Pi_{dom}^{*\ m})$.
\end{thm}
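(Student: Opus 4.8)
The plan is to obtain the statement not from scratch but by absorbing, one at a time, the two generalizations listed in the \emph{General case} paragraph into the argument already established for $\Pi_{dom}^m$ (the preceding theorem, for rules with a single relation literal and $|B^-(r)|\,{\leq}\,1$). Throughout I fix $I \in AS(\Pi)$, write $\widehat{I}=m(I)$, and maintain the two invariants on which that argument rests: that $\widehat{I}$ is \emph{(A)} a classical model of $\Pi_{dom}^{*\ m}$ and \emph{(B)} a minimal model of the reduct $f(\Pi_{dom}^{*\ m})^{\widehat{I}}$. Establishing \emph{(A)} and \emph{(B)} is exactly what $m(I) \in AS(\Pi_{dom}^{*\ m})$ asks for.

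First I would dispose of $|\Gamma_{\mi{rel}}|\,{>}\,1$ by the device already suggested: regard $\Gamma_{\mi{rel}}(r)=\mi{rel}(t_1,t_1'),\dots,\mi{rel}(t_k,t_k')$ as one $2k$-ary built-in $\mi{rel}'$ and extend the four-case typing to it, declaring an abstract tuple to be of type I/II/III/IV according to whether $\mi{rel}'$ holds on the representative and holds (resp.\ fails) universally or merely existentially over the concrete pre-images. These type facts are again finitely computable, so the construction clauses (0)--(2) apply \emph{verbatim} with $\mi{rel}'$ substituted for $\mi{rel}$, and the semantic role of each clause --- type-I forces the head, type-III/IV guesses it, and the flip to $\neg\mi{rel}'$ catches the type-IV witness --- is unchanged. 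Since every place where the base argument mentions ``the relation'' now reads $\mi{rel}'$, that half of the reasoning transfers directly, and after this step I may assume a single ($n$-ary) relation literal.

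Next I would absorb $|B^-(r)|\,{>}\,1$, which the merging step does not touch. The shift $B^{\mi{shift}}_{l_i}(r)$ was designed to catch the one case ``$l_i$ false in $I$ but $m(l_i)$ true in $\widehat{I}$''; with several negative literals any \emph{subset} of them can be in this situation simultaneously, so the construction adds the shift-and-flip choice rules for every such combination (equivalently, it iterates the single-shift step over all negative literals). The heart of the proof is then a bookkeeping correspondence: for each ground instance of $r$ that fires under $I$, I would single out the one abstract rule whose positive body is precisely the shift of the negative literals that happen to be false in $I$ and whose relation literal matches the type case realised by the concrete witnesses, and show this rule lies in the reduct and supplies the needed support. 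Because that distinguished rule is a choice rule exactly when uncertainty is present, it provides the head for \emph{(A)} without over-committing, which is what keeps \emph{(B)} intact; this correspondence can be organized either directly or by induction on $|B^-(r)|$.

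Assembling the two steps is immediate, since the negative-literal extension is orthogonal to the relation typing: after merging we are in the single-relation setting, and the combinatorial shift family composes with the type-case clauses rule by rule, so the model-and-minimality argument for $\Pi_{dom}^m$ carries over to $\Pi_{dom}^{*\ m}$. I expect the genuine obstacle to be the minimality direction \emph{(B)} with both generalizations present at once: one must verify that the family of shift-and-flip choice rules is simultaneously \emph{sufficient} --- every $m(l)$ with $l\in I$ remains supported in $f(\Pi_{dom}^{*\ m})^{\widehat{I}}$, so no original answer set is lost --- and \emph{harmless} --- no such choice rule can force an atom outside $m(I)$, which would break $\widehat{I}$ being a minimal model. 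Making the indexing of ``the distinguished abstract rule'' by (shifted subset, realised type case) precise, and checking it is well defined for every concrete firing, is the delicate point on which the whole argument turns.
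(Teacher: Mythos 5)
Your proposal is correct and follows essentially the same route as the paper: the paper in fact states this theorem without a separate proof, relying on the generalized construction (viewing $\Gamma_{\mi{rel}}$ as a single $2k$-ary relation $\mi{rel}'$ with lifted type cases, and applying the shift to each negative literal) to inherit the model-plus-minimality argument sketched for $\Pi_{dom}^m$, which is precisely your two-step absorption. One point where you are usefully more precise than the paper's terse description of the general case: you note that with $|B^-(r)|>1$ the shift must range over all subsets of negative literals (equivalently, be iterated), since a ground rule whose negative literals are simultaneously false in $I$ yet mapped to true abstract literals in $m(I)$ is only caught by shifting that entire subset --- singleton shifts alone would not fire there.
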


For constraints, the
steps creating choice rules can be skipped 
as we cannot guess over $\bot$. 
Further simplifications and optimizations 
can help to 
avoid introducing too many spurious answer sets.
Syntactic extensions can also be addressed. Rules with choice and cardinality constraints can be lifted with the same structure. For conditional literals with conditions over negative literals, additional rules with shifting will be necessary; otherwise, the condition can be lifted the same.

\section{Using Abstraction for Policy Refutation}\label{abs_policy}

As an application case, we are interested in the problem of defining declarative policies for reactive agents and reasoning about their behavior, especially in non-deterministic environments with uncertainty in the initial state. In such environments, searching for a plan that reaches the main goal easily becomes troublesome. Therefore, we focus on defining policies that choose a sequence of actions from the current state with the current observations, in order to achieve some subgoal, and then checking the overall behavior of these policies. More details of such policies can be found in \cite{zgs16jelia}.

\leanparagraph{Background}
Formally, a \emph{system} 
$\sys\,{=}\,\langle \mathcal{S},\mathcal{S}_0,\mathcal{A},\Phi\rangle$
consists of a finite set $\mathcal{S}$ of states, a set
$\mathcal{S}_0 \,{\subseteq}\, \mathcal{S}$ of initial states, a finite set
$\mathcal{A}$ of actions, and a non-deterministic transition relation
$\Phi: \mathcal{S}\,{\times}\,\mathcal{A} \rightarrow 2^{\mathcal{S}}$.

A sequence $\sigma=a_1,a_2,\dots,a_n$ of actions is \emph{executable}, if

\smallskip

\centerline{$\exists s_0,\dots,s_n \in \mathcal{S}_0\, \forall\, 0\leq i<n:
s_{i+1} \ins \Phi(s_i,a_{i+1})$}

\smallskip

\noindent holds. We denote such \emph{(potential) plans} by $\Sigma$. By
$\Sigma(s)$, resp.\ $\Phi_\Sigma(s,\sigma)$, we denote those 
executable from $s$, 
resp.\ the set of states $s_n$ reached by them. The latter
induces the transition function
$\Phi_\Sigma: \mathcal{S} \times \Sigma \rightarrow 2^\mathcal{S}$ of
the system
$\sys_\Sigma{=}\langle \mathcal{S},\mathcal{S}_0,\Sigma,\Phi_\Sigma\rangle$.

We consider policies for a 
goal $\mu$, 
that guide the agent with action sequences 
computed according to the knowledge base $KB$,
which is a formal world model in a transition system view. 

\begin{defn}[Policy]
Given a system $\sys=\langle \mathcal{S},\mathcal{S}_0,\mathcal{A},\Phi\rangle$ and a set $\Sigma$ of plans with actions of $\mathcal{A}$, a \emph{policy} is 
a function $P_{\mu, KB}:\mathcal{S} {\rightarrow} 2^{\Sigma}$
s.t. $P_{\mu, KB}(s) \,{\subseteq}\, \Sigma(s)$. 
\end{defn}
\noindent 
Assuming that $\mu$ and $KB$ are fixed, we omit subscripts of $P$.
Informally, the agent executes at state $s$ any plan $\sigma\in P_{\mu, KB}(s)$ to achieve a (hidden) subgoal, and continues at state $s'\in
\Phi_\Sigma(s,\sigma)$ by executing any plan $\sigma'\in P_{\mu,
KB}(s')$ for the next subgoal, etc, until the goal $\mu$ is established.

We focus on trajectories in
$\sys_\Sigma$
followed by the policy,
and consider goals $\mu$ expressed as propositional formulas 
over the states, which are consistent sets of propositional literals. The aim is to reach some state 
satisfying $\mu$, from all possible initial states and through all
policy trajectories;
i.e., 
the policy works if $\forall s_0\,{\in}\,\mathcal{S}_0 :
s_0\,{\models}\, {\bf AF}\mu$
holds,
where {\bf A} ranges over all trajectories under $P\!$ starting at some
initial state $s_0$ in $A_\Sigma$.

\begin{exmp}
In the missing person scenario, the goal $\mu$ can be expressed as
$\mu=\bigvee_{i,j=1}^n \mi{rAt}(i,j)\land \mi{pAt}(i,j)$, 
where $\mi{rAt}(X,Y)$ (resp.\ $\mi{pAt}(X,Y)$)  states that the 
robot (resp.\ person) is at position $X,Y$, 
or by using a designated atom $\mi{caught}$ that is defined by this formula.

\end{exmp}

\subsection{Abstraction over the policy behavior}

In order to abstract the policy behavior, we consider abstraction on the states and also on the actions; Figure~\ref{fig:abs_fig}(a) illustrates our abstraction approach. The abstract system is then built straightforwardly over the transitions 
of the policy.

\leanparagraph{State abstraction}We consider a set
$\Psi$ of abstract literals
for the set
$\hat{\mathcal{S}} \,{\subseteq}\, 2^{\Psi}$ of abstract states, and
an abstraction function
%
$h_{st}\,{:}\,\mathcal{S} \,{\rightarrow}\, \hat{\mathcal{S}}$
on states, on which further conditions might be imposed.
For convenience, we denote 
$h_{st}(s)$ 
by $\hat{s}$, and identify 
$\hat{s}$
 with the states $\{ s\,{\in}\, \mathcal{S}\ {\mid}\
 h_{st}(s)\,{=}\, \hat{s}\}$
abstracted to it.

\begin{figure}[t]
\caption{State and action abstraction}
\vspace{0.5em}
\label{fig:abs_fig}
\centering
\subfigure[][Abstraction]{\begin{tikzpicture}
   \path (0,0) node(x) {$s$}
(0,1.5) node(y) {$\hat{s}$};
\draw[->] (x) -- node[midway,left]{$h_{st}$} (y);
\path (2,1.5) node(z) {$\hat{s}'$}
(2,0) node(t) {$s'$};
\draw[->] (z) -- node[midway,right]{$h^{-1}_{st}$} (t);
\draw[->] (y) -- node[midway,above]{$\hat{a}$} (z);
\draw[->] (x) -- node[midway,below]{$\sigma$} (t);
\node(k) at (1,0.75) {$h_{act}$};
\draw[dashed] (1,0) -- (k);
\draw[->,dashed] (k) -- (1,1.5);
\end{tikzpicture}}~~~
\subfigure[][Domain abstraction]{\includegraphics[scale=.6]{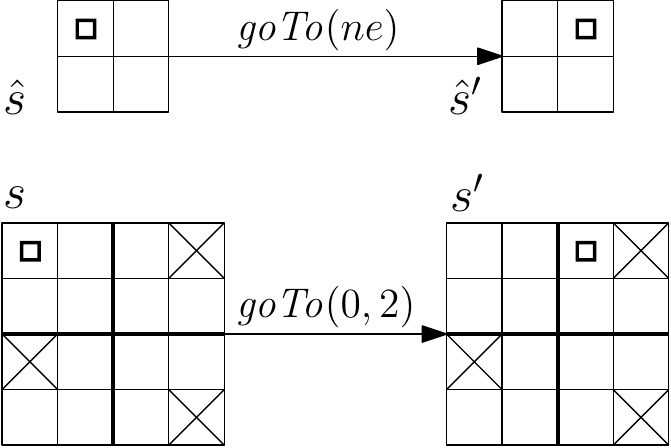}}
\vspace{-1.5em}
\end{figure}

\leanparagraph{Action abstraction} 
We consider a set $\hat{\mathcal{A}}$ of abstract actions
for $\Psi$, and an abstraction function
$h_{act}{:} \Sigma {\rightarrow} \hat{\mathcal{A}}$ which maps 
action sequences $\sigma$ to abstract actions $\hat{a}$. 
Similarly, 
we denote $h_{act}(\sigma)$ by $\hat{\sigma}$ and identify
$\hat{a}$ with $\{ \sigma \,{\in}\, \Sigma \mid h_{act}(\sigma)\,{=}\,\hat{a}\}$.

\begin{exmp}[ctd]\label{ex:regions}
Figure~\ref{fig:abs_fig}(b) shows an example of a domain abstraction that maps the large domain into a smaller sized domain, with $h_{st|D}(\mi{rAt}(X,Y)){=}\mi{rAt}(\lceil \frac{X}{n/2}\rceil,\lceil\frac{Y}{n/2}\rceil){=}\mi{rAt}(Rx,Ry),$ $Rx,Ry {\in} \{1,2\}$. Similarly, $\mi{goTo}(X,Y)$ is mapped to some $\mi{goTo}(Rx,Ry)$
. For simplicity, we will refer to the abstract cells as regions $\{nw,ne,sw,se\}$.
\end{exmp}
\nhbls

\leanparagraph{Abstract system with an abstracted policy}
The transitions of
$\absys$ are defined over those 
in $\sys$ 
chosen by the policy.

\begin{defn}
For a system
$\sys=\langle \mathcal{S},\mathcal{S}_0,\mathcal{A},\Phi\rangle$, a set
$\Sigma$ of plans 
over $\mathcal{A}$, a transition function
$\Phi_\Sigma$, and a policy $P$, an abstract system $\widehat{A}=\langle \hat{\mathcal{S}}, \hat{\mathcal{S}}_0, \hat{\mathcal{A}}, \widehat{\Phi}_P \rangle$
is \emph{generated} by a state abstraction $h_{st}$ and 
action abstraction $h_{act}$, if 

\vspace{1pt}

\begin{compactitem}[--]
\item $\hat{\mathcal{S}}_0 = \{\hat{s}_0 \mid s_0 \in \mathcal{S}_0\}$ are the initial abstract states, and 
\item $\widehat{\Phi}_{P}: \widehat{S} \times  \hat{\mathcal{A}} \rightarrow \widehat{S}$ is the abstract transition function according to the policy $P$, defined as
\smallskip

\centerline{$
\widehat{\Phi}_{P}(\hat{s},\hat{a}) {=} 
\{
\hat{s}' \mid  \exists s'' \in
\hat{s}, \sigma \in P(s'')\cap \hat{a}:
s' \in \Phi_\Sigma(s'',\sigma)\}.
$}
\nqbls
\end{compactitem}
\end{defn}
Note that any abstract transition $\hat{s},\hat{a},\hat{s}'$
in $\widehat{A}$
must 
stem from a transition $s,\sigma,s'$ in $\mathcal{A}$
via policy $P$,
i.e., an abstract transition is introduced only if there is a corresponding original transition. This gives an over-approximation of the policy's behavior on the original system \cite{clarke03}.

\subsubsection{Constructing an abstract system}
We can apply the abstraction method to ASP programs 
with action
descriptions and policy rules, where we focus on 
policies with single action plans, 
with some particulars.

\begin{itemize}
 \item 
 It is possible to have
the mapping $m_a{:}{\cal A}{\rightarrow} \hat{\cal A}$ 
create abstract
predicates, i.e., $m_a(a(v_1,\dots,v_n))=\hat{a}(m_{d}(v_1),\dots,$
$m_{d}(v_n))$ where $\hat{a} = \widehat{a(v_1,\dots,v_n)}$ depends also on the arguments of $a$ in order keep some possibly necessary details of the original actions in the abstract action.  However, with action atoms occurring in transition descriptions
only positively in rule bodies, and in policy rules only in rule
heads, no further treatment of these atoms is necessary.

\item For policy rules 
$a {\leftarrow} B$ that select an action $a$, abstract rules
$0\{\hat{a}\}1 {\leftarrow} \widehat{B}$ 
(while correct) are undesirable as they allow to skip the action and would create a spurious trajectory. 
To have an optimization over the abstraction, 
this can be avoided by
literals $l$ in 
$B$ with singleton mappings, i.e.,
$|m^{-1}(\hat{l})|{=}1$, or with a non-singleton mapping
where $l {\in} B^{+}\setminus S^+_{\mi{rel}}$, or with cases 
that allow for simplification of the choice rules. 

\item Time arguments amount to a special sort, and we do not abstract
over it (i.e., each time point $t$ is abstracted to itself).
Thus, time variables, terms etc.\ simply remain unaffected.

\item For plan abstraction
$h_{act}{:} \Sigma {\rightarrow} \hat{\cal A}$, dedicated atoms
$\sigma$ can describe plans with their effects, obtained 
from unfolding 
the 
effect rules of the actions in $\sigma$ 
and their preconditions.
\end{itemize}

\begin{exmp}[ctd]
By omitting most of the details 
except the directly
affected literals and the literals related with the goal condition,
the domain abstraction 
in Ex.\ref{ex:regions} 
yields the following abstract rules
for \eqref{eq:pol_formula}-\eqref{eq:pol_formula2}: 
\begin{align}
&1\{\mi{goTo}(\widehat{X},\widehat{Y},T){:} \widehat{X},\widehat{Y} {\in} \widehat{D}\}1 {\leftarrow} \mi{not}\ \mi{seen}(T), \mi{not}\ \mi{caught}(T).\nonumber\\
&\mi{goTo}(\widehat{X},\widehat{Y},T) \leftarrow \mi{seen}(T), \mi{not}\ \mi{caught}(T), \mi{pAt}(\widehat{X},\widehat{Y},T).\nonumber\\
&0\{\mi{caught}(T)\}1 \leftarrow \mi{rAt}(\widehat{X},\widehat{Y},T), \mi{pAt}(\widehat{X},\widehat{Y},T).\label{eq:abs_rules}\\
&0\{\mi{seen}(T)\}1 \leftarrow \mi{pAt}(\widehat{X},\widehat{Y},T). \nonumber
\end{align}
\end{exmp}

\subsection{Counterexample search} Recall 
our aim of over-approxi\-mating the problem
of checking whether 
obeying the policy $P$ always reaches the goal $\mu$ 
(i.e., all paths starting from ${\cal S}_0$ 
reach a state that satisfies $\mu$).
For policies where all
states have outgoing transitions, state abstractions that distinguish
the goal conditions can avoid false positives. 
That is, if no ``bad'' abstract trajectory exists in which
$\mu$ is unachieved (a {\em counterexample}), then no 
``bad'' original trajectory exists; 
and we can check the policy behavior on
the abstract system, as in \cite{clarke03}.

Concretely, we search for an abstract counterexample (cex)
trajectory in the abstract system $\absys$
of length at most $n$, where
$n$ is large enough. As the original space state $\mathcal{S}$ is
finite, any path trajectory longer than $|\mathcal{S}|+1$
clearly must loop. If we cannot find such a counterexample trajectory, the policy works, cf.\ \cite{ClarkeKOS04}. 
%
On the other hand,  if  a cex trajectory
$\hat{\tau}=\hat{s}_0,\hat{a}_0,\dots,\hat{s}_n$ is found, we need to
check whether $\hat{\tau}$ has a corresponding concrete trajectory $\tau$ in $\sys_\Sigma$. 
The counterexample is \emph{spurious}, if no such 
$\tau$ exists.

\begin{figure}[t]
\caption{}
\centering
\subfigure[][a counterexample trajectory]{\includegraphics[scale=0.75]{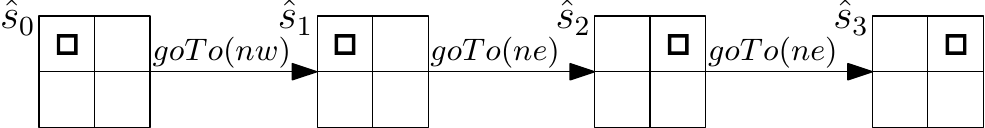}}
\subfigure[][a corresponding concrete trajectory with failure]{\includegraphics[scale=0.75]{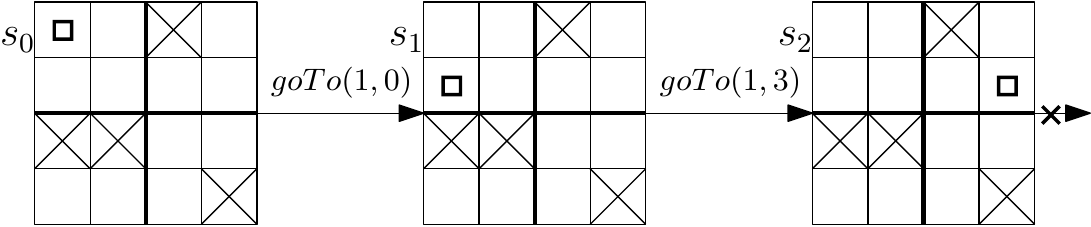}}
\label{fig:failure}

\vspace*{-1.5\baselineskip}

\end{figure}

\begin{exmp}[ctd]\label{ex:spur}
A 3-step counterexample (Figure~\ref{fig:failure}(a)) to
finding the person 
is 
$\hat{\tau}\,{=}\,\mi{rAt}(nw,0),\mi{goTo}(\mi{nw},0),\mi{rAt}(\mi{nw}$, $1),\mi{goTo}(\mi{ne},1),\mi{rAt(ne},2),\mi{goTo}(\mi{ne},2),\mi{rAt(ne},3)$.
Figure~\ref{fig:failure}(b) shows a corresponding trajectory 
in 
$\sys_\Sigma$;
it fails at step 2 to find an
action corresponding to
$\mi{goTo}(\mi{ne})$, 
as the 
policy 
would move to $\mi{nw}$. In fact, no corresponding trajectory 
without
failure can be found; so $\hat{\tau}$ is spurious.
\end{exmp}
 
Example~\ref{ex:spur} shows that, as expected, omitting most of the
details of the problem makes it easy to encounter spurious
trajectories. We need to add back some of the details that the policy uses in order to reduce spurious transitions.

\begin{exmp}[ctd]
Let us apply domain abstraction on $\mi{farthest}(X,Y,X1,Y1)$, and
further auxiliary literals such as $\mi{farthestDist}(X,Y,D)$. We get
$\mi{farthest}(Rx,Ry,Rx1,Ry1)$ and $\mi{farthestDist}(Rx,Ry,RD)$,
where $RD \in \{0,1\}$ tells if the distance is $<n/2$ or $\geq
n/2$. With the added back information in the abstraction, 
in the refined abstract program $\hat{\tau}$ is
no longer encountered.
To avoid nondeterminism, 
we may use actions of the form $\mi{goWithDist}$,$\mi{goToPerson}$
introduced by  auxiliary action descriptions in the original system.
\end{exmp}

\subsection{Failure analysis and refinement}

Further studying the cause of failure in spurious cex trajectories $\hat{\tau}=\hat{s}_0,\hat{a}_0,\dots,\hat{s}_n$
would give hints on which
information to add back in the
abstraction in order to eliminate $\hat{\tau}$. 
A failure state $\hat{s}_i$ occurs in
$\hat{\tau}$ 
due to any state $s\,{\in}\, \hat{s}_i$ that is reachable from some $s_0\,{\in}\, \hat{s}_0$ following
$\sys_\Sigma$ but has no transition $\sigma$
corresponding to $\hat{a}_i$ to some state $s'\,{\in}\,\hat{s}_{i+1}$. 
The reason can be that (i)
$P$ determines no plan $\sigma\in P(s)$ that can be abstracted to
$\hat{a}_i$, or (ii) some such plan $\sigma$  has no transition to any
state in $\hat{s}_{i+1}$.
%
We can impose conditions on the abstraction to avoid 
these failures types and 
eliminate spurious cex trajectories. However, 
if the policy uses
information depending on facts that are not affected by the
actions, a systematic approach for refinement becomes necessary.

ASP provides the possibility of reasoning over the failures and
obtaining an explanation on what is missing in the abstraction. 
Constraints about the policy behavior can be obtained,
e.g., the agent can not travel a smaller distance than
before. Depending on which constraints are violated in the failure,
one can get hints 
for refining the abstraction. 
Thus,
a CEGAR-like approach \cite{clarke03} can be
used, where starting from an
initial abstraction, one repeats searching for a counterexample, checks 
correctness of the latter, obtains a
failure explanation in case of spuriousness, and refines the abstraction
until a concrete counterexample is found.
\nhbls

\subsection{Discussion on evaluation}

A major difficulty of the policy checking problem is the (huge) number
        of initial states, indirect effects, choice of plans by the
        agents and possible nondeterministic effects of
        actions. Therefore existing planning approaches can not be
        used off-the-shelf. Our preliminary results on the motivating
        example show that a standard approach of searching for a
        counterexample for the policy's behavior in the original
        program is infeasible (i.e., reaches time or memory outage) 
        for larger dimensions (e.g., for $n=64$). 
        Increasing policy sophistication makes counterexample search harder, and outage is hit earlier.
		On the other hand, with an abstraction, we are able
        to easily get some candidate solution. Checking whether
        the candidate solution is concrete is faster than applying the standard approach. 
        However, in the worst case one
        needs to go through all of the introduced spurious solutions
        before reaching a correct one. Ongoing study is on dealing
        with the encountered spurious abstract solutions and to refine
        the abstraction to get rid of them.

Moreover, as a side note, even for problems such as numbering each
        node in an $n\times n$ grid with its position in a loop-free
        path (similar to the challenge planning problem VisitAll), the naive
        approach of guessing a numbering that matches the requirements
        is known not to scale, e.g., for $n=10$ no solution can be found
        in 5 hours. On the other hand, building an abstraction,
        computing and checking the abstract
        solution and (for now) manually refining the abstraction by
        hints obtained from the checking is faster to reach a concrete
        solution.

\section{Conclusion}

We introduced a    
method for abstracting ASP programs by
over-approximating the answer sets, motivated by applications in
reasoning about actions. It keeps the structure and can be easily implemented, and further optimized.

\leanparagraph{Related Work}
In addition to 
abstraction in planning, 
abstraction has 
been studied
for agent verification 
in situation
calculus action theory \cite{banihashemi2017abstraction} 
and of
multi-agent systems against specifications 
defined in epistemic
logic \cite{lomuscio2016verification} and alternating time temporal logic
\cite{belardinelli2016abstraction}. 
\citeNselfBYB{Banihashemi et al.}{banihashemi2017abstraction}
consider high and low-level agent specifications 
to obtain sound and complete abstraction via bisimulation.
\citeNselfBYB{Lomuscio et al.}{lomuscio2016verification} present an automated
predicate abstraction method in three-valued
semantics, 
and refinement based on Craig’s interpolants \cite{belardinelli2016agent}.
Abstraction has been studied in logic
programming \cite{COUSOT1992103}, but stable semantics 
was not addressed.  

\leanparagraph{Outlook} 
Our work is a necessary starting point for
reasoning about ASP-based agent policy behavior in large environments. To
overcome
spurious answer sets, ongoing work covers
obtaining hints for a refinement 
from the constraint violations in
spurious answers or by debugging 
the inconsistency caused by the spurious
answers approach, and a refinement methodology.

\section*{Acknowledgements}

This work has been supported by Austrian Science Fund (FWF) project W1255-N23.


\label{lastpage}
\end{document}